\newcommand{\ldb}{[\![}
\newcommand{\rdb}{]\!]}
\newcommand{\pplus}{\mathbin{\text{\ttfamily\upshape ++}}}
\newcommand{\id}[1]{\texttt{#1}}
\newcommand{\scong}{\mathsf{\equiv}}
\newcommand{\nameeq}{\mathsf{\equiv_N}}
\newcommand{\freenames}[1]{\mathsf{FN}(#1)}
\newcommand{\quotep}[1]{\mathsf{@} #1}
\newcommand{\substn}[2]{\{ #1 / #2 \}}
\newcommand{\meaningof}[1]{\ldb #1 \rdb}
\newcommand{\red}{\rightarrow}
\newcommand{\wbbisim}{\stackrel{\centerdot}{\approx}} 
\newcommand{\bc}{\mathbin{\mathbf{::=}}}
\newcommand{\bm}{\mathbin{\mathbf\mid}}
\newlength{\ltext}
\newlength{\lmath}
\newlength{\cmath}
\newlength{\rmath}
\newlength{\rtext}
\begin{document}
\def\lastname{Meredith}

\title{Meta-MeTTa: an operational semantics for MeTTa}
\titlerunning{meta-metta}

\date{2 March 2023}

\author{ Lucius Gregory Meredith\inst{1} \\
         \and 
         Ben Goertzel\inst{2}
         \and Jonathan Warrell\inst{3}
         \and Adam Vandervorst\inst{4}
}
\institute{
        {CEO, F1R3FLY.io 9336 California Ave SW, Seattle, WA 98103, USA} \\
  \email{ f1r3fly.ceo.com } \\
  \and
  {CEO, SingularityNet} \\
  \email{ ben@singularitynet.io } \\
  \and
  {Researcher, SingularityNet} \\
  \email{jonathan.warrell@singularitynet.io} \\
  \and
  \email{adam.vandervorst@singularitynet.io}
}

\maketitle              


\begin{abstract}

  We present an operational semantics for the language MeTTa.

\end{abstract}




\section{Introduction and motivation}
We present the operational semantics for the language MeTTa. MeTTa is designed as a language in which humans and AGIs write the behavior of AGIs, being jointly developed by SingularityNet.io and F1R3FLY.io, as part of SingularityNet.io's OpenCog and Hyperon projects \cite{DBLP:conf/agi/HartG08} \cite{YT:GoertzelIklePotapovHyperon2022}. One of the principle motivations of this document is to help developers of MeTTa clients know what is a correct and compliant implementation. The document serves roughly the same function as the JVM specification or Ethereum's Yellow paper \cite{wood2014ethereum}.

\section{Towards a common language for computational dynamics}
Three of the most successful branches of scientific discourse all agree on the shape of a model adequate for expressing and effecting computation. Physics, computer science, and mathematics all use the same standard shape. A model adequate for computation comes with an algebra of states and “laws of motion.”

One paradigmatic example from physics is Hilbert spaces and the Schroedinger equation. In computer science and mathematics the algebra of states is further broken down into a monad (the free algebra of states) and an algebra of the monad recorded as some equations on the free algebra.

Computer science represents laws of motion, aka state transitions, as rewrite rules exploiting the structure of states to determine transitions to new states. Mainstream mathematics is a more recognizable generalization of physics, coding state transitions, aka behavior, via morphisms (including automorphisms) between state spaces.

But all three agree to a high degree of specificity on what ingredients go into a formal presentation adequate for effecting computation.

\subsection{Examples from computer science}
Since Milner's seminal Functions as processes paper, the gold standard for a presentation of an operational semantics is to present the algebra of states via a grammar (a monad) and a structural congruence (an algebra of the monad), and the rewrite rules in Plotkin-style SOS format \cite{DBLP:journals/mscs/Milner92} \cite{Plotkin04theorigins}.

\subsubsection{$\lambda$-calculus}

\paragraph{Algebra of States}
\begin{eqnarray*}
  Term[V] & \bc & V \\
  & \;\bm\; & \lambda V.Term[V] \\
  & \;\bm\; & \mathsf{(} Term[V] \; Term[V] \mathsf{)}
\end{eqnarray*}

The structural congruence is the usual $\alpha$-equivalence, namely that $\lambda x.M \scong \lambda y.(M\{y / x\})$ when $y$ not free in $M$.

It is evident that $Term[V]$ is a monad and imposing $\alpha$-equivalence gives an algebra of the monad.

\paragraph{Transitions}
The rewrite rule is the well know $\beta$-reduction.
\begin{mathpar}
  \inferrule* [lab=Beta]{}{((\lambda x.M)N) \to M\{N / x\}}
\end{mathpar}

\subsubsection{$\pi$-calculus}

\paragraph{Algebra of States}
\begin{eqnarray*}
  Term[N] & \bc & \mathsf{0} \\
  & \;\bm\; & \mathsf{for}\mathsf{(}N \leftarrow N\mathsf{)}Term[N] \\
  & \;\bm\; & N\mathsf{!}\mathsf{(}N\mathsf{)} \\
  & \;\bm\; & \mathsf{(}\mathsf{new}\; N\mathsf{)}Term[N] \\
  & \;\bm\; & Term[N] \; \mathsf{|}\; Term[N] \\
  & \;\bm\; & \mathsf{!}Term[N]
\end{eqnarray*}

The structural congruence is the smallest equivalence relation including $\alpha$-equivalence making $(Term[N],\;\mathsf{|}\;,\mathsf{0})$ a commutative monoid, and respecting

\begin{eqnarray*}
  \mathsf{(}\mathsf{new}\; x\mathsf{)}\mathsf{(}\mathsf{new}\; x\mathsf{)}P & \scong & \mathsf{(}\mathsf{new}\; x\mathsf{)}P \\
  \mathsf{(}\mathsf{new}\; x\mathsf{)}\mathsf{(}\mathsf{new}\; y\mathsf{)}P & \scong & \mathsf{(}\mathsf{new}\; y\mathsf{)}\mathsf{(}\mathsf{new}\; x\mathsf{)}P \\
  (\mathsf{(}\mathsf{new}\; x\mathsf{)}P)\mathsf{|}Q & \scong & \mathsf{(}\mathsf{new}\; x\mathsf{)}(P\mathsf{|}Q), x \notin \freenames{Q}
\end{eqnarray*}

Again, it is evident that $Term[N]$ is a monad and imposing the structural congruence gives an algebra of the monad.

\paragraph{Transitions}
The rewrite rules divide into a core rule, and when rewrites apply in
context.
\begin{mathpar}
  \inferrule* [lab=COMM]{}{\mathsf{for}\mathsf{(}y \leftarrow x\mathsf{)}P \mathsf{|} x\mathsf{!}\mathsf{(}z\mathsf{)} \to P\{ z / y \}} \\
  \and  
  \inferrule* [lab=PAR]{P \to P'}{(\mathsf{new}\; x)P \to (\mathsf{new}\; x)P'}
  \and  
  \inferrule* [lab=PAR]{P \to P'}{P\mathsf{|}Q \to P'\mathsf{|}Q} \\
  \and
  \inferrule* [lab=STRUCT]{P \scong P' \to Q' \scong Q}{P \to Q}
\end{mathpar}

For details see \cite{DBLP:journals/mscs/Milner92}.

\subsubsection{rho-calculus}

\paragraph{Algebra of States}
Note that the rho-calculus is different from the $\lambda$-calculus and the $\pi$-calculus because it is \emph{not} dependent on a type of variables or names. However, it does give us the opportunity to expose how ground types, such as Booleans, numeric and string operations are imported into the calculus. The calculus does depend on the notion of a $0$ process. In fact, this could be any builtin functionality. The language rholang, derived from the rho-calculus, imports all literals as \emph{processes}. Note that this is in the spirit of the $\lambda$-calculus and languages derived from it: Booleans, numbers, and strings are terms, on the level with $\lambda$ terms.

So, the parameter to the monad for the rho-calculus takes the collection of builtin processes. Naturally, for all builtin processes other than $0$ there have to be reduction rules. For brevity, we take $Z = \{0\}$.
\begin{mathpar}
\inferrule* [lab=process] {} {Term[Z] \bc Z \;\bm\; \mathsf{for}(
  Name[Z] \leftarrow Name[Z] )Term[Z] \;\bm\; Name[Z]\mathsf{!}(Term[Z]) 
  \\ \and \;\bm\; \mathsf{*}Name[Z] \;\bm\; Term[Z]\mathsf{|}Term[Z] } \\
\and \inferrule* [lab=name] {}
     {Name[Z] \bc \mathsf{@}Term[Z] }
\end{mathpar}

The structural congruence is the smallest equivalence relation including $\alpha$-equivalence making $(P,\;\mathsf{|}\;,\mathsf{0})$ a commutative monoid.

Again, it is evident that $Term[Z]$ is a monad and imposing the structural congruence gives an algebra of the monad.

\paragraph{Transitions}
The rewrite rules divide into a core rule, and when rewrites apply in
context.
\begin{mathpar}
  \inferrule* [lab=COMM] {x_{t} \;\nameeq\; x_{s}} {\mathsf{for}(y \leftarrow x_{t} )P \;\mathsf{|}\; x_{s}!(Q)
  \red P\substn{\quotep{Q}}{y}} \\
  \and
  \inferrule* [lab=PAR]{P \red P'}{P\mathsf{|}Q \red P'\mathsf{|}Q}
  \and
  \inferrule* [lab=STRUCT]{{P \;\scong\; P'} \andalso {P' \red Q'} \andalso {Q' \;\scong\; Q}}{P \red Q}
\end{mathpar}

For details see \cite{DBLP:journals/entcs/MeredithR05}.

\pagebreak
\subsubsection{The JVM}

While its complexity far exceeds the presentations above, the JVM specification respects this same shape. Here is an example from the specification of what the operation aaload does \cite{JVM7Spec}.

\begin{figure}[h!]
  \scalebox{0.75}{\includegraphics[width=\linewidth]{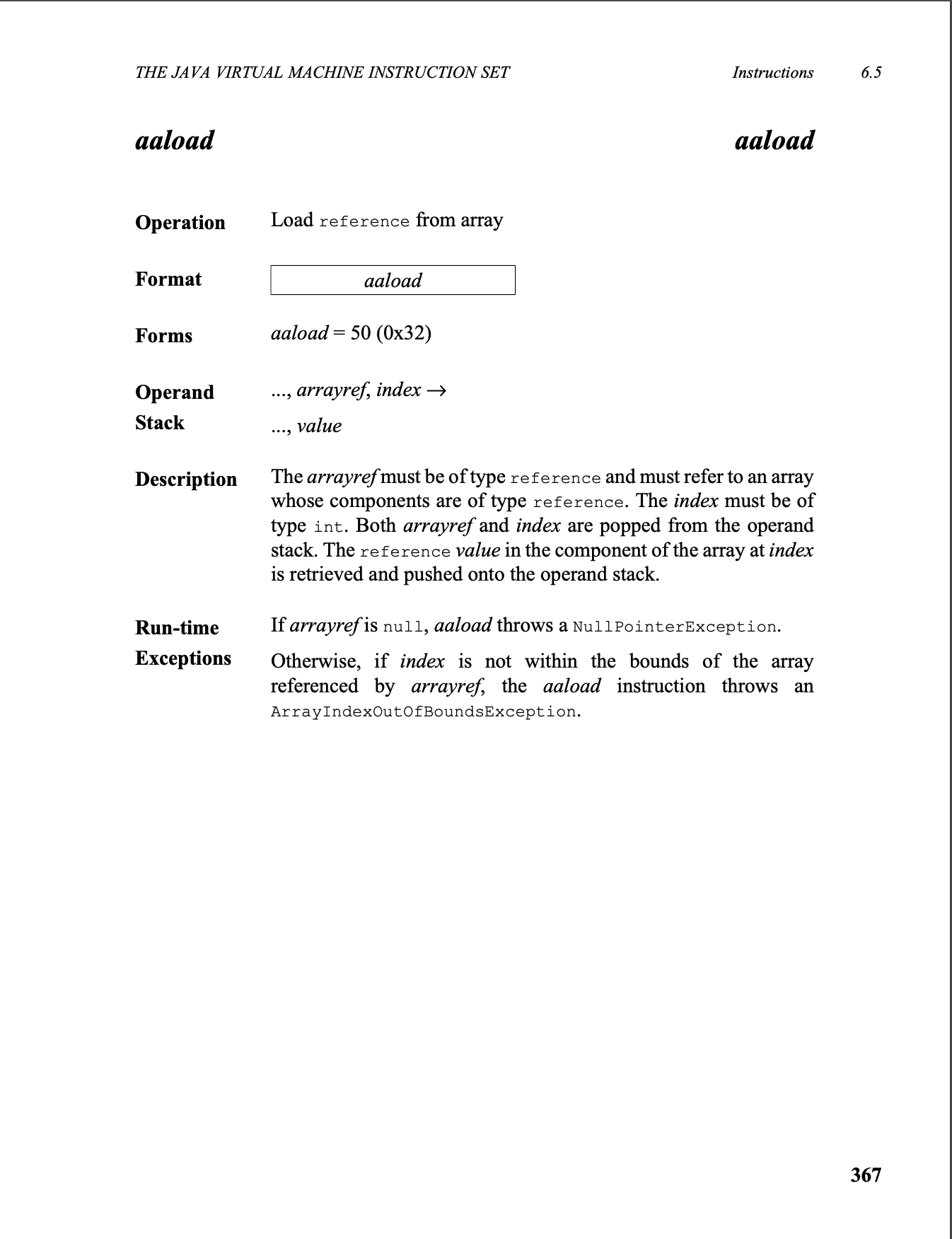}}
  \caption{AALOAD instruction specification}
  \label{fig:AALOADSpec}
\end{figure}

\paragraph{Register machines and WYSIWYG semantics}One important point about the JVM versus the previous three
examples. The first three examples are examples of WYSIWYG operational
semantics in the sense that the states \emph{are} the terms of the
calculi. In the case of the JVM the terms in the language are only
part of the state, which includes the stack, the heap, and several
registers. WYSIWYG models make static analysis dramatically
simpler. Specifically, an analyzer only has to look at terms in the
language.

This phenomenon is not restricted to the JVM. Even the famous SECD machine is not strictly WYSIWYG  \cite{DBLP:conf/sac/Buday15}. In fact, the register based states are not strictly a monad, but contain a monad.

\section{A presentation of the semantics of MeTTa}

A presentation of the semantics of MeTTa must therefore provide a monad describing the algebra of states, a structural equivalence quotienting the algebra of states, and some rewrite rules describing state transitions. Such a description is the minimal description that meets the standard for describing models of computation.

Note that to present such a description requires at least that much expressive power in the system used to formalize the presentation. That is, the system used to present a model of computation is itself a model of computation admitting a presentation in terms of an algebra of states and some rewrites. This is why a meta-circular evaluator is a perfectly legitimate presentation. That is, a presentation of MeTTa’s semantics in MeTTa is perfectly legitimate. Meta-circular presentations are more difficult to unpack, which is why such presentations are typically eschewed, but they are admissible. In fact, a meta-circular evaluator may be the most pure form of presentation.

But, this fact has an important consequence. No model that is at least Turing complete can be “lower level” than any other.

\subsection{Rationale for such a presentation}

The rationale for such a presentation is not simply that this is the way it’s done. Instead, the benefits include

\begin{itemize}
  \item an effective (if undecidable) notion of program equality;
  \item an independent specification allowing independent implementations;
  \item meta-level computation, including type checking, model checking, macros, computational reflection, etc.
\end{itemize}

\subsubsection{Effective program equality}
Of course, the notion we are calling program equality is called bisimulation in the literature. One of the key benefits of having a notion of bisimulation explicitly spelled out is that it makes possible both by-hand and automated proofs of correctness of implementations of MeTTa. We illustrate this in a later section of the paper where we specify a compilation from MeTTa to the rho-calculus and rholang and provide both a clear statement of what it means for the compiler to be correct and a proof that it is so.

One of the motivations for providing the example to is foster within the MeTTa developer community a development methodology known as correct-by-construction. Some of the benefits of correct-by-construction include avoiding bytecode injection attacks, avoiding concurrency issues, and in general avoiding technical debt.

With a clearly spelled out operational semantics the correct-by-construction software development methodology becomes a real practice and not just an ideal to be striven for -- except when we're under deadling, and we're always under deadline!

\subsubsection{Independent implementations}
Hand in hand with being able to prove an implementation correct is the ability to support multiple independent implementations, each of which is provably in compliance with the specification. Pioneered by efforts like the JVM, this approach has been remarkably effective in modern projects, like Ethereum, where the Yellow Paper made it possible for many independent teams to implement compilant Ethereum clients. This network of clients is regularly responsible for the deployment and correct execution of millions of \$ of transactions each month.

Perhaps more salient to the MeTTa developer community, it means that no one project needs to do all the development of MeTTa clients. This spreads not only the cost of development around, but the risk. In a word, it makes MeTTa more robust against the failure of any one project or team. Correct-by-construction methodology and the tools of operational semantics dramatically enhance the already proven power of open source development to decentralize cost and risk.

In short, scaling is not just about performance and throughput; it's also about adoption. Adoption at scale is a very error prone process, as anyone familiar with the histories of a wide range of computer-based technologies will attest. Linux users, for example, will bear witness to the historical lag between device drivers for Windows and Mac versus the drivers compliant to the same specs that ran on Linux. Correct-by-construction dramatically reduces the number of errors in multiple independent implementations, and thus makes scaling through adoption a much more tractable proposition.

\subsection{Meta-level computation}
Presumably efforts by human developers to develop provably correct compilation schemes from MeTTa to other computational models are just the first generation of intelligences that will seek to do so. Over time the hope is that many different kinds of intelligences will model, and hence make amenable to adaptation, MeTTa's model of computation. In particular, AGI's will seek to do the same and at dramatically different scales and timeframes. 

\subsection{MeTTa Operational Semantics}
The complexity of MeTTa's operational semantics is somewhere between the simplicity of the $\lambda$-calculus and the enormity of the JVM. Note that MeTTa is not WYSIWYG, however, it is not such a stretch to make a version of MeTTa that is.

\subsubsection{Algebra of States}


\paragraph{Terms}
\begin{eqnarray*}
  Term & \bc & \mathsf{(} Term \; [Term] \mathsf{)} \\
  & \;\bm\; & \mathsf{\{} Term \; [Term] \mathsf{\}} \\
  & \;\bm\; & \mathsf{(} Term{} \;\mathsf{|}\; [Receipt] \; \mathsf{.} \; [Term] \mathsf{)} \\
  & \;\bm\; & \mathsf{\{} Term \;\mathsf{|}\; [Receipt] \; \mathsf{.} \; [Term] \mathsf{\}} \\
  & \;\bm\; & Atom \\
  & \;\bm\; & () \\
  & \;\bm\; & \{\} \\
\end{eqnarray*}

We impose the equation $\mathsf{\{} \ldots, t, u, \ldots \mathsf{\}} = \mathsf{\{} \ldots, u, t, \ldots \mathsf{\}}$, making terms of this form multisets. Note that for multiset comprehensions this amounts to non-determinism in the order of the terms delivered, but they are still streams. We use $\mathsf{\{}Term\mathsf{\}}$ to denote the set of terms that are (extensionally or intensionally) defined multisets, and $\mathsf{(}Term\mathsf{)}$ to denote the set of terms that are (extensionally or intensionally) defined lists.

We assume a number of polymorphic operators, such as $\pplus$ which acts as union on multisets and append on lists and concatenation on strings, and $::$ which acts as cons on lists and the appropriate generalization for the other data types.

\subsubsection{Extensionally vs intensionally defined spaces}
We make a distinction between extensionally defined spaces and terms, where each element of the space or term has been explicitly constructed, versus intensionally defined spaces and terms where elements are defined by a rule. The latter we call comprehensions.

We adopt this design for numerous reasons:
\begin{itemize}
  \item it provides an explicit representation for bindings;
  \item it provides an explicit representation for infinite terms and spaces;
  \item it provides an explicit scope for access to remotely accessed data.
\end{itemize}

\paragraph{Explicit bindings} Comprehensions provide a superior framework for the explicit representation of bindings. For example, they significantly generalize $\mathsf{let}$ and $\mathsf{letrec}$ constructs. In particular, the generally accepted semantics for $\mathsf{let}$ and $\mathsf{letrec}$ do not extend smoothly to streams, while comprehensions were effectively made for infinitary structures, including streams.

\paragraph{Infinite terms and spaces}
In fact, since the advent of set comprehensions and continuing through SQL's $\mathsf{SELECT-FROM-WHERE}$ to Haskell's do-nation and Scala's for-comprehensions the general mechanism for describing intensionally specified collections has been shown to be a powerful abstraction mechanism. Specifically, we now know that comprehensions are really syntactic sugar for the monadic operations, which makes these an exceptionally flexible framework for representing a notion of binding across a wide range of computational phenomena.

\paragraph{Remotely accessed data} Whether accessing data from a distributed atom space, or a resource on the Internet, or calling a foreign function across a memory boundary, remotely accessed data comes with different failure modes. Data providers can be offline or otherwise inaccessible. Data can be ill-formatted or peppered with an array of challenges, from buffer or register overflows to triggering divergent computation.

Providing a scope for these failure modes has a distinct advantage for defensive computation. Beyond that, however, are issues related with fair merge of divergent behavior. The famous example from PCF was logical disjunction. An evaluation strategy for disjunction that evaluates both arguments will diverge if one of the arguments diverges. An evaluation that returns true immediately if the first argument it evaluates is true will potentially diverge less often. This generalizes to a wide range of situations involving the integration of multiple foreign sources of data. 

While there is not a one-size-fits all solution, Oleg Kiselyov has provided a natural mechanism in the monad transformer, LogicT \cite{DBLP:conf/icfp/KiselyovSFS05}. This provides a policy language for describing merge policies. It is an elegant solution that fits perfectly with comprehensions.

\paragraph{Contexts}
We use McBride's notion of the derivative of a polynomial functor to calculate $1$-holed contexts for extensionally defined terms. We use $K$ to range over $\partial Term$.

\paragraph{Term sequences}
\begin{eqnarray*}
  [Term] & \bc & \epsilon \\
  & \;\bm \; & Term \\
  & \;\bm \; & Term \; [Term]
\end{eqnarray*}

\paragraph{Bindings}
\begin{eqnarray*}
  Receipt & \bc & ReceiptLinearImpl \\
  & \;\bm \; & ReceiptRepeatedImpl \\
  & \;\bm \; & ReceiptPeekImpl
\end{eqnarray*}
\begin{eqnarray*}
  [Receipt] & \bc & Receipt \\
  & \;\bm \; & Receipt \mathsf{;}\; [Receipt]
\end{eqnarray*}
\begin{eqnarray*}
  ReceiptLinearImpl & \bc & [LinearBind] \\
  LinearBind & \bc & [Name] \; NameRemainder \; \mathsf{\leftarrow} \; AtomSource
\end{eqnarray*}
\begin{eqnarray*}
  [LinearBind] & \bc & LinearBind \\
  & \;\bm \; & LinearBind \;\mathsf{\&}\; [LinearBind]
\end{eqnarray*}
\begin{eqnarray*}
  AtomSource & \bc & Name \\
  & \;\bm \; & Name \mathsf{?!} \\
  & \;\bm \; & Name \mathsf{!?} \mathsf{(} [Term] \mathsf{)} \\
  ReceiptRepeatedImpl & \bc & [RepeatedBind] \\
  RepeatedBind & \bc & [Name] \; NameRemainder \; \mathsf{\Leftarrow} \; Atom
\end{eqnarray*}
\begin{eqnarray*}
  [RepeatedBind] & \bc & RepeatedBind \\
  & \;\bm \; & RepeatedBind \; \mathsf{\&}\; [RepeatedBind] \\
  ReceiptPeekImpl & \bc & [PeekBind] \\
  PeekBind & \bc & [Name] \; NameRemainder \; \mathsf{\leftharpoonup}\; Atom
\end{eqnarray*}
\begin{eqnarray*}
  [PeekBind] & \bc & PeekBind \\
  & \;\bm \; & PeekBind \; \mathsf{\&}\; [PeekBind]
\end{eqnarray*}
\begin{eqnarray*}
  TermRemainder & \bc & \mathsf{...} \; TermVar \\
 & \;\bm \; & \epsilon \\
  NameRemainder & \bc & \mathsf{...} \; \mathsf{@} TermVar \\
& \;\bm \; & \epsilon
\end{eqnarray*}

\paragraph{Literals and builtins}
\begin{eqnarray*}
  Atom & \bc & Ground \\
  & \;\bm \; & Builtin \\
  & \;\bm \; & Var \\
  Name & \bc & \mathsf{\_} \\
  & \;\bm \; & Var \\
  & \;\bm \; & \mathsf{@} Term
\end{eqnarray*}
\begin{eqnarray*}  
  [Name] & \bc & \epsilon \\
  & \;\bm \; & Name \\
  & \;\bm \; & Name \mathsf{,} [Name] \\
  BoolLiteral & \bc & \mathsf{true} \\
 & \;\bm \; & \mathsf{false} \\
  Ground & \bc & BoolLiteral \\
  & \;\bm \; & LongLiteral \\
  & \;\bm \; & StringLiteral \\
  & \;\bm \; & UriLiteral \\
  Builtin & \bc & \mathsf{\bc} \\
  & \;\bm \; & \mathsf{=} \\
  & \;\bm \; & \mathsf{transform} \\
  & \;\bm \; & \mathsf{addAtom} \\
  & \;\bm \; & \mathsf{remAtom} \\
  TermVar & \bc & \mathsf{\_} \\
  & \;\bm \; & Var \\
\end{eqnarray*}

\paragraph{States}
\begin{eqnarray*}
  State & \bc & \langle \mathsf{\{}Term\mathsf{\}} \mathsf{,} \mathsf{\{}Term\mathsf{\}} \mathsf{,} \mathsf{\{}Term\mathsf{\}} \mathsf{,} \mathsf{\{}Term\mathsf{\}} \rangle
\end{eqnarray*}

We will use $S, T, U$ to range over states and $\mathsf{i} := \pi_{1}$, $\mathsf{k} := \pi_{2}$, $\mathsf{w} := \pi_{3}$and $\mathsf{o} := \pi_{4}$ for the first, second, third, and fourth projections as accessors for the components of states. Substitutions are ranged over by $\sigma$, and as is standard, substitution application will be written postfix, e.g. $t\sigma$.

A state should be thought of as consisting of $4$ \emph{registers}:
\begin{itemize}
  \item $\mathsf{i}$ is the input register where queries are issued;
  \item $\mathsf{k}$ is the knowledge base;
  \item $\mathsf{w}$ is a workspace;
  \item $\mathsf{o}$ is the output register.
\end{itemize}

We separate the input, workspace, and output registers to allow for
coarse-graining of bisimulation. An external agent cannot necessarily observe the transitions related to the workspace.

\paragraph{State contexts}
We lift term contexts to states. Thus, if $t = K[u]$ and $u \in S_{r}$ for $r \in \{ \mathsf{i}, \mathsf{k}, \mathsf{w}, \mathsf{o}\}$, then we write $K[S]$ for the state in which $t$ replaces $u$ in $S_{r}$.

\subsubsection{Rewrite Rules}

\begin{mathpar}
  \inferrule* [lab=Query]{\sigma_{i} = \mathsf{unify}(t',t_{i}), k = \mathsf{\{} (\mathsf{=}\; t_{1} \; u_{1}), \ldots, (\mathsf{=}\; t_{n} \; u_{n}) \mathsf{\}} \pplus k', \mathsf{insensitive}(t',k')}{\langle \mathsf{\{} K[t'] \mathsf{\}} \pplus i, k, w, o \rangle \to \langle i, k, \mathsf{\{} K[u_{1}\sigma_{1}] \mathsf{\}} \pplus\; \ldots\; \pplus \mathsf{\{} K[u_{n}\sigma_{n}] \mathsf{\}} \pplus w, o \rangle} \\
  \and
  \inferrule* [lab=Chain]{\sigma_{i} = \mathsf{unify}(u,t_{i}), k = \mathsf{\{} (\mathsf{=}\; t_{1} \; u_{1}), \ldots, (\mathsf{=}\; t_{n} \; u_{n}) \mathsf{\}} \pplus k', \mathsf{insensitive}(u,k')}{\langle i, k, \mathsf{\{} K[u] \mathsf{\}} \pplus w, o \rangle \to \langle i, k, \mathsf{\{} K[u_{1}\sigma_{1}] \mathsf{\}} \pplus\; \ldots\; \pplus \mathsf{\{} K[u_{n}\sigma_{n}] \mathsf{\}} \pplus w, o \rangle} \\
  \and
  \inferrule* [lab=Transform]{\sigma_{i} = \mathsf{unify}(t,t_{i}), k = \mathsf{\{} K_{1}[t_{1}], \ldots, K_{n}[t_{n}] \mathsf{\}} \pplus k',\mathsf{insensitive}(t,k')}{\langle \mathsf{\{} \mathsf{(}\mathsf{transform}\; t \; u\mathsf{)} \mathsf{\}} \pplus i, k, w, o \rangle \to \langle i, k, \mathsf{\{} K_{1}[u\sigma_{1}] \mathsf{\}} \pplus\; \ldots\; \pplus \mathsf{\{} K_{n}[u\sigma_{n}] \mathsf{\}} \pplus w, o \rangle} \\
  \and
  \inferrule* [lab=AddAtom1]{}{\langle \mathsf{\{} \mathsf{(} \mathsf{addAtom}\; t\mathsf{)}\mathsf{\}}  \pplus i, k, w, o \rangle \to \langle i, k \pplus \mathsf{\{} t\mathsf{\}}, w, \mathsf{\{} \mathsf{()}\mathsf{\}} \pplus o \rangle} \\
  \and
  \inferrule* [lab=AddAtom2]{\langle i_{1}, k_{1}, w_{1}, o_{1}\rangle \to \langle i_{2}, k_{2}, w_{2}, o_{2} \rangle, k_{3} = \mathsf{\{} \mathsf{(} \mathsf{addAtom}\; t\mathsf{)}\mathsf{\}} \pplus k_{1}}{\langle i_{1}, k_{3}, w_{1}, o_{1}\rangle \to \langle i_{2}, \mathsf{\{} \mathsf{(} \mathsf{addAtom}\; t\mathsf{)}, t\mathsf{\}} \pplus k_{2}, w_{2}, \mathsf{\{} \mathsf{()}\mathsf{\}} \pplus o_{2} \rangle} \\
  \inferrule* [lab=RemAtom1]{}{\langle \mathsf{\{} \mathsf{(} \mathsf{remAtom}\; t\mathsf{)}\mathsf{\}}  \pplus i, \mathsf{\{} t \mathsf{\}} \pplus k, w, o \rangle \to \langle i, k, w, \mathsf{\{} \mathsf{()}\mathsf{\}} \pplus o \rangle} \\
  \and
  \inferrule* [lab=RemAtom2]{\langle i_{1}, k_{1}, w_{1}, o_{1}\rangle \to \langle i_{2}, k_{2}, w_{2}, o_{2} \rangle, k_{3} = \mathsf{\{} \mathsf{(} \mathsf{remAtom}\; t\mathsf{)} \mathsf{\}} \pplus \mathsf{\{} t \mathsf{\}} \pplus k_{1}}{\langle i_{1}, k_{3}, w_{1}, o_{1}\rangle \to \langle i_{2}, \mathsf{\{} \mathsf{(} \mathsf{remAtom}\; t\mathsf{)}\mathsf{\}} \pplus k_{2}, w_{2}, \mathsf{\{} \mathsf{()}\mathsf{\}} \pplus o_{2} \rangle} \\
  \and
  \inferrule* [lab=Output]{\mathsf{insensitive}(u,k)}{\langle i, k, \mathsf{\{} u \mathsf{\}} \pplus w, o \rangle \to \langle i, k, w, \mathsf{\{} u \mathsf{\}} \pplus o \rangle} \\
\end{mathpar}

Where $\mathsf{insensitive}(t,k)$ means that $\mathsf{(} \mathsf{=}\; t'\; u \mathsf{)} \in k \Rightarrow \neg \mathsf{unify}(t,t')$.

\section{Ground literals and builtins}
As with all practical programming languages, MeTTa hosts a number of computational entities and operations that are already defined on the vast majority of platforms on which an implementation of the language may be written and/or run. Here we describe the ground literals and builtin operations that every compliant MeTTa operation must provide.
\subsection{Ground literals}
As the grammar spells out every compliant implementation of MeTTa must provide:

\begin{itemize}
  \item Booleans;
  \item signed and unsigned 64bit integers;
  \item 64bit floating point;
  \item strings
\end{itemize}

\subsection{Polymorphic operations}
Every compliant implementation of the MeTTa client must provide the following polymorphic operations:

\begin{itemize}
  \item $* : A \times A \rightarrow A$ for A ranging over Booleans, integers, floating point;
  \item $+ : A \times A \rightarrow A$ for A ranging over Booleans, integers, floating point, and strings;
\end{itemize}

\subsection{Transition rules}
\begin{mathpar}
  \inferrule* [lab=BoolAdd1]{}{\langle \mathsf{\{} \mathsf{(} \mathsf{+}\; b_{1} \; b_{2} \mathsf{)}\mathsf{\}}  \pplus i, k, w, o \rangle \to \langle i, k, w, \mathsf{\{} b_{1}\mathsf{||} b_{2} \mathsf{\}} \pplus o \rangle} \\
  \and
  \inferrule* [lab=BoolAdd2]{w = \mathsf{\{} \mathsf{(} \mathsf{+}\; b_{1} \; b_{2} \mathsf{)}\mathsf{\}} \pplus w'}{\langle i, k, w, o \rangle \to \langle i, k, w', \mathsf{\{} b_{1}\mathsf{||} b_{2} \mathsf{\}} \pplus o \rangle} \\
  \and
  \inferrule* [lab=BoolMult1]{}{\langle \mathsf{\{} \mathsf{(} \mathsf{*}\; b_{1} \; b_{2} \mathsf{)}\mathsf{\}}  \pplus i, k, w, o \rangle \to \langle i, k, w, \mathsf{\{} b_{1}\mathsf{\&} b_{2} \mathsf{\}} \pplus o \rangle} \\
  \and
  \inferrule* [lab=BoolMult2]{w = \mathsf{\{} \mathsf{(} \mathsf{*}\; b_{1} \; b_{2} \mathsf{)}\mathsf{\}} \pplus w'}{\langle i, k, w, o \rangle \to \langle i, k, w', \mathsf{\{} b_{1}\mathsf{\&} b_{2} \mathsf{\}} \pplus o \rangle} \\
  \and
  \inferrule* [lab=NumAdd1]{}{\langle \mathsf{\{} \mathsf{(} \mathsf{+}\; n_{1} \; n_{2} \mathsf{)}\mathsf{\}}  \pplus i, k, w, o \rangle \to \langle i, k, w, \mathsf{\{} n_{1}\mathsf{+} n_{2} \mathsf{\}} \pplus o \rangle} \\
  \and
  \inferrule* [lab=NumAdd2]{w = \mathsf{\{} \mathsf{(} \mathsf{+}\; n_{1} \; n_{2} \mathsf{)}\mathsf{\}} \pplus w'}{\langle i, k, w, o \rangle \to \langle i, k, w', \mathsf{\{} n_{1}\mathsf{+} n_{2} \mathsf{\}} \pplus o \rangle} \\
  \inferrule* [lab=NumMult1]{}{\langle \mathsf{\{} \mathsf{(} \mathsf{*}\; n_{1} \; n_{2} \mathsf{)}\mathsf{\}}  \pplus i, k, w, o \rangle \to \langle i, k, w, \mathsf{\{} n_{1}\mathsf{*} n_{2} \mathsf{\}} \pplus o \rangle} \\
  \and
  \inferrule* [lab=NumMult2]{w = \mathsf{\{} \mathsf{(} \mathsf{*}\; n_{1} \; n_{2} \mathsf{)}\mathsf{\}} \pplus w'}{\langle i, k, w, o \rangle \to \langle i, k, w', \mathsf{\{} n_{1}\mathsf{*} n_{2} \mathsf{\}} \pplus o \rangle} \\
  \and
  \inferrule* [lab=StrAdd1]{}{\langle \mathsf{\{} \mathsf{(} \mathsf{+}\; s_{1} \; s_{2} \mathsf{)}\mathsf{\}}  \pplus i, k, w, o \rangle \to \langle i, k, w, \mathsf{\{} s_{1}\mathsf{+} s_{2} \mathsf{\}} \pplus o \rangle} \\
  \and
  \inferrule* [lab=StrAdd2]{w = \mathsf{\{} \mathsf{(} \mathsf{+}\; s_{1} \; s_{2} \mathsf{)}\mathsf{\}} \pplus w'}{\langle i, k, w, o \rangle \to \langle i, k, w', \mathsf{\{} s_{1}\mathsf{+} s_{2} \mathsf{\}} \pplus o \rangle}
\end{mathpar}

\section{Bisimulation}
Since the operational semantics is expressed as a transition system we recover a notion of bisimulation. There are two possible ways to generate the notion of bisimulation in this context. One uses the Leifer-Milner-Sewell approach of deriving a bisimulation from the rewrite rules \cite{DBLP:conf/concur/LeiferM00}. However, the technical apparatus is very heavy to work with. The other is to adapt barbed bisimulation developed for the asynchronous $\pi$-calculus to this setting \cite{DBLP:books/daglib/0004377}.

The reason we need to use some care in developing the notion of bisimulation is that there are substitutions being generated and applied in many of the rules. So, a single label will not suffice. However, taking a query in the input space as a barb will. This notion of barbed bisimulation will provide a means of evaluating the correctness of compilation schemes to other languages. We illustrate this idea in the section on compiling MeTTa to the rho-calculus.

\section{The cost of transitions}
\subsection{Network access tokens}
If you’re reading this, chances are that you know what an Internet-facing API is, and why it might need to be protected from denial of service attacks. But, just in case you’re one of the “normies”  that don’t know what these terms refer to, let’s you, me, Sherman, and Mr. Peabody all take a trip in the WayBack Machine way back to 2005. 

In those days there was still a naivete about the infinite potential of free and open information. QAnon, deep fakes, ChatGPT and other intimations that the Internet might just be the modern equivalent of the Tower of Babel were not yet even a gleam in their inventors’ eyes. Companies would regularly set up network services that anyone with an Internet connection could access, from anywhere in the world (dubbed Internet-facing). Such services were accessed by sending requests in a particular, well defined format (deriving from the software term application program interface, or API) to an Internet address served by machines in the network service the organization had set up. 

It was quickly discovered that such Internet-facing APIs were vulnerable to attack. If a single bad actor sent thousands or millions of requests to the service, or a botnet of millions sent a few requests each to the service, it was possible for the service to become bogged down and unresponsive to legitimate requests. Now, in reality, all this was discovered long before 2005. But, by 2005 a practice for dealing with this kind of attack was more or less well established. 

The solution is simple. The network proprietor issues a digital token. A request with a given token embedded in it is honored, up to some number of requests per token. This practice is less onerous and costly than having to issue and maintain authorization credentials for login challenges. Many, many companies do this and have done this for the better part of two decades. Not just software or digital service companies like Google and Microsoft issue tokens like this,  Other companies, such as media companies like The New York Times, and The Guardian, also employ this practice. (The hyperlinks above are to their token distribution pages.) The practice is ubiquitous and well accepted. It is intrinsic to the functionality of an open network such as the Web.

Also, it is important to note that many of these services allow for storage of digital content on the networks provided by these services. However, bad actors can still abuse the services by repeatedly uploading illegal content (like child pornography, copyrighted material or even nuclear secrets). So, an entity offering Internet-enabled services must reserve the right to invalidate these tokens in case they discover they are being abused in this or other ways. These utility tokens are essential to comply with a whole host of very good laws.

\subsection{Ethereum’s big idea}
Satoshi’s discovery of a new class of economically secured, leaderless distributed consensus protocols, embodied in proof-of-work but also, elsewhere, embodied in proof-of-stake and other consensus algorithms, was a pretty good idea. It led to the Bitcoin network. Buterin’s suggestion that Satoshi’s consensus be applied to the state of a virtual machine instead of a ledger was a really good idea, and led to the Ethereum network. It creates a distributed computer that runs everywhere and nowhere in particular. Less poetically, every node in the network is running a copy of the virtual machine and the consensus protocol ensures that all the copies agree on the state of the virtual machine.

Like the Internet-facing APIs launched all throughout the 00’s and beyond, Ethereum’s distributed computer is accessible to anyone with an Internet connection. And, as such, without protection would be vulnerable to denial of service  attacks. In fact, it’s potentially even more vulnerable because a request to the Ethereum distributed computer is a piece of code. This code could, in principle, run forever, or take up infinite storage space. Vitalik’s clever idea, building on the established practice of network access tokens, is to require tokens for each computational or storage step to prevent such abuses.

\subsection{MeTTa effort objects}
MeTTa takes the same approach. Transitions in the operational semantics cost a computational resource (effort objects, or EOs, for short) that are ``purchased'' with tokens. This section reprises the operational semantics with the cost of each step spelled out in terms of the structure of EOs.

\subsubsection{Resource-bounded Rewrite Rules}

We assume a polymorphic cost function $\mathsf{\#}$ taking values in the domain of EOs. We assume the domain of EOs supports notions of $+$ and $-$ making it a \emph{commutative} group. We use EOs$_{\bot}$ to denote EOs extended with $\bot$ to indicate no EOs assigned. We assume a term representation of elements of EOs.

Additionally, we adopt the notation $t_{\chi(p,eos)}$ to indicate a pair consiting of a term and an element of the domain of EOs$_{\bot}$, signed with a private key $p$; and assume we can lift the usual functions (e.g., $\mathsf{unify}$ and $\mathsf{insensitive}$) to these signed terms either by projecting to the term component, or by some other more sophisticated mechanism.

Furthermore, we expand the state to include a fifth register consisting of an element of $\{ Term \}$, written $eos$, which in turn contains terms of the form $(h(p)\; eo)$, where $h(p)$ is a function of a private key $p$, and $eo$ is an element of the domain EOs.

\begin{mathpar}
  \inferrule* [lab=Query]{k = \mathsf{\{} (\mathsf{=}\; t_{1} \; u_{1}), \ldots, (\mathsf{=}\; t_{n} \; u_{n}) \mathsf{\}} \pplus k', \and \mathsf{insensitive}(t',k') \\ \sigma_{i} = \mathsf{unify}(t',t_{i}), \and w' = \mathsf{\{} (K[u_{1}\sigma_{1}])_{\chi(p,\bot)} \mathsf{\}} \pplus\; \ldots\; \pplus \mathsf{\{} (K[u_{n}\sigma_{n}])_{\chi(p,\bot)} \mathsf{\}} \pplus w \\ c = \Sigma_{i}\mathsf{\#}(\sigma_{i}) + \Sigma_{i}\mathsf{\#}(u_{i}\sigma_{i}), \and ((e' + e ) - c) > 0 \\ eos = \{(h(p)\; e')\} \pplus eos', \and eos'' = \{(h(p)\; ((e' + e ) - c))\} \pplus eos' }{\langle \mathsf{\{} K[t'_{\chi(p,e)}] \mathsf{\}} \pplus i, k, w, o; eos \rangle \xrightarrow{c} \langle i, k, w', o; eos'' \rangle} \\
  \and
  \inferrule* [lab=Chain]{k = \mathsf{\{} (\mathsf{=}\; t_{1} \; u_{1}), \ldots, (\mathsf{=}\; t_{n} \; u_{n}) \mathsf{\}} \pplus k',\and \mathsf{insensitive}(u,k') \\ \sigma_{i} = \mathsf{unify}(u,t_{i}), \and w' = \mathsf{\{} (K[u_{1}\sigma_{1}])_{\chi(p,\bot)} \mathsf{\}} \pplus\; \ldots\; \pplus \mathsf{\{} (K[u_{n}\sigma_{n}])_{\chi(p,\bot)} \mathsf{\}} \pplus w \\ c = \Sigma_{i}\mathsf{\#}(\sigma_{i}) + \Sigma_{i}\mathsf{\#}(u_{i}\sigma_{i}), \and (e - c) > 0 \\ eos = \{(h(p)\; e)\} \pplus eos', \and eos'' = \{(h(p)\; (e - c))\} \pplus eos'}{\langle i, k, \mathsf{\{} K[u]_{\chi(p,\bot)} \mathsf{\}} \pplus w, o; eos \rangle \xrightarrow{c} \langle i, k,  w', o; eos'' \rangle} \\
  \and
  \inferrule* [lab=Transform]{k = \mathsf{\{} t_{1}, \ldots, t_{n} \mathsf{\}} \pplus k',\and \mathsf{insensitive}(t,k')\\ \sigma_{i} = \mathsf{unify}(t,t_{i}), \and w' = \mathsf{\{} (K_{1}[u\sigma_{1}])_{\chi(p,\bot)} \mathsf{\}} \pplus\; \ldots\; \pplus \mathsf{\{} (K_{n}[u\sigma_{n})]_{\chi(p,\bot)} \mathsf{\}} \pplus w \\ c = \Sigma_{i}\mathsf{\#}(\sigma_{i}) + \Sigma_{i}\mathsf{\#}(u_{i}\sigma_{i}), \and ((e' + e ) - c) > 0 \\ eos = \{(h(p)\; e')\} \pplus eos', \and eos'' = \{(h(p)\; ((e' + e ) - c))\} \pplus eos'}{\langle \mathsf{\{} \mathsf{(}\mathsf{transform}\; t \; u\mathsf{)}_{\chi(p,e)} \mathsf{\}} \pplus i, k, w, o; eos \rangle \xrightarrow{c} \langle i, k, w', o; eos'' \rangle} \\
  \and
  \inferrule* [lab=AddAtom1]{((e + e') - \mathsf{\#}(t)) > 0 \\ eos = \{(h(p)\; e')\} \pplus eos', \and eos'' = \{(h(p)\; ((e + e') - \mathsf{\#}(t))) \} \pplus eos'}{\langle \mathsf{\{} \mathsf{(} \mathsf{addAtom}\; t\mathsf{)}_{\chi(p,e)} \mathsf{\}}  \pplus i, k, o; eos \rangle \xrightarrow{\mathsf{\#}(t)} \langle i, k \pplus \mathsf{\{} t\mathsf{\}}, w, \mathsf{\{} \mathsf{()}\mathsf{\}} \pplus o; eos'' \rangle} \\
  \and
  \inferrule* [lab=AddAtom2]{\langle i_{1}, k_{1}, w_{1}, o_{1}; eos_{1}\rangle \xrightarrow{c} \langle i_{2}, k_{2}, w_{2}, o_{2}; eos_{2} \rangle, k_{3} = \mathsf{\{} \mathsf{(} \mathsf{addAtom}\; t\mathsf{)}_{\chi(p,\bot)}\mathsf{\}} \pplus k_{1} \\ (e - \mathsf{\#}(t)) > 0 \\ eos_{1}' = \{(h(p)\; e)\} \pplus eos_{1}, \and eos_{2}' = \{(h(p)\; (e - \mathsf{\#}(t))) \} \pplus eos_{2}}{\langle i_{1}, k_{3}, w_{1}, o_{1}; eos_{1}'\rangle \xrightarrow{\mathsf{\#}(t)} \langle i_{2}, \mathsf{\{} \mathsf{(} \mathsf{addAtom}\; t\mathsf{)}_{\chi(p,\bot)}, t_{\chi(p,\bot)}\mathsf{\}} \pplus k_{2}, w_{2}, \mathsf{\{} \mathsf{()}\mathsf{\}} \pplus o_{2}; eos_{2}' \rangle} \\
  \inferrule* [lab=RemAtom1]{((e + e') - \mathsf{\#}(t)) > 0 \\ eos = \{(h(p)\; e')\} \pplus eos', \and eos'' = \{(h(p)\; ((e + e') - \mathsf{\#}(t))) \} \pplus eos'}{\langle \mathsf{\{} \mathsf{(} \mathsf{remAtom}\; t\mathsf{)}_{\chi(p,e)} \mathsf{\}}  \pplus i, \mathsf{\{} t \mathsf{\}} \pplus k, w, o; eos \rangle \xrightarrow{\mathsf{\#}(t)} \langle i, k,  \mathsf{\{} w, \mathsf{()}\mathsf{\}} \pplus o; eos'' \rangle} \\
  \and
  \inferrule* [lab=RemAtom2]{\langle i_{1}, k_{1}, w_{1}, o_{1}; eos_{1}\rangle \xrightarrow{c} \langle i_{2}, k_{2}, w_{2}, o_{2}; eos_{2} \rangle, k_{3} = \mathsf{\{} \mathsf{(} \mathsf{remAtom}\; t\mathsf{)}_{\chi(p,\bot)} \mathsf{\}} \pplus \mathsf{\{} t \mathsf{\}} \pplus k_{1} \\ (e - \mathsf{\#}(t)) > 0 \\ eos_{1}' = \{(h(p)\; e)\} \pplus eos_{1}, \and eos_{2}' = \{(h(p)\; (e - \mathsf{\#}(t))) \} \pplus eos_{2}}{\langle i_{1}, k_{3}, w_{1}, o_{1}; eos_{1}\rangle \xrightarrow{\mathsf{\#}(t)} \langle i_{2}, \mathsf{\{} \mathsf{(} \mathsf{remAtom}\; t\mathsf{)}_{\chi(p,\bot)}\mathsf{\}} \pplus k_{2}, w_{2}, \mathsf{\{} \mathsf{()}\mathsf{\}} \pplus o_{2}; eos_2 \rangle} \\
  \and
  \inferrule* [lab=Output]{\mathsf{insensitive}(u,k), \\ (e - \mathsf{\#}(u)) > 0 \\ eos = \{(h(p)\; e)\} \pplus eos', \and eos'' = \{(h(p)\; (e - \mathsf{\#}(u))) \} \pplus eos'}{\langle i, k, \mathsf{\{} u_{\chi(p,\bot)} \mathsf{\}} \pplus w, o; eos \rangle \xrightarrow{\mathsf{\#}(u)} \langle i, k, w, \mathsf{\{} u \mathsf{\}} \pplus o; eos'' \rangle}
\end{mathpar}

Likewise, the builtin operations must come with a cost.

\begin{mathpar}
  \inferrule* [lab=BoolAdd1]{((e + e') - {\#}(b_{1}) + \mathsf{\#}(b_{2})) > 0 \\ eos = \{(h(p)\; e')\} \pplus eos', \and eos'' = \{(h(p)\; ((e + e') - {\#}(b_{1}) + \mathsf{\#}(b_{2}))) \} \pplus eos'}{\langle \mathsf{\{} \mathsf{(} \mathsf{+}\; b_{1} \; b_{2} \mathsf{)}_{\chi(p,e)}\mathsf{\}}  \pplus i, k, w, o; eos \rangle \xrightarrow{{\#}(b_{1}) + \mathsf{\#}(b_{2})} \langle i, k, w, \mathsf{\{} b_{1}\mathsf{||} b_{2} \mathsf{\}} \pplus o; eos'' \rangle} \\
  \and
  \inferrule* [lab=BoolAdd2]{w = \mathsf{\{} \mathsf{(} \mathsf{+}\; b_{1} \; b_{2} \mathsf{)}_{\chi(p,e)}\mathsf{\}} \pplus w' \\ eos = \{(h(p)\; e)\} \pplus eos', \and eos'' = \{(h(p)\; (e - \mathsf{\#}(b_{1}) + \mathsf{\#}(b_{2}))) \} \pplus eos'}{\langle i, k, w, o; eos \rangle \xrightarrow{{\#}(b_{1}) + \mathsf{\#}(b_{2})} \langle i, k, w', \mathsf{\{} b_{1}\mathsf{||} b_{2} \mathsf{\}} \pplus o; eos'' \rangle} \\
  \and
  \inferrule* [lab=BoolMult1]{((e + e') - \mathsf{\#}(b_{1}) + \mathsf{\#}(b_{2})) > 0 \\ eos = \{(h(p)\; e')\} \pplus eos', \and eos'' = \{(h(p)\; ((e + e') - \mathsf{\#}(b_{1}) + \mathsf{\#}(b_{2}))) \} \pplus eos'}{\langle \mathsf{\{} \mathsf{(} \mathsf{*}\; b_{1} \; b_{2} \mathsf{)}_{\chi(p,e)}\mathsf{\}}  \pplus i, k, w, o; eos \rangle \xrightarrow{{\#}(b_{1}) + \mathsf{\#}(b_{2})} \langle i, k, w, \mathsf{\{} b_{1}\mathsf{\&} b_{2} \mathsf{\}} \pplus o; eos'' \rangle} \\
  \and
  \inferrule* [lab=BoolMult2]{w = \mathsf{\{} \mathsf{(} \mathsf{*}\; b_{1} \; b_{2} \mathsf{)}_{\chi(p,e)}\mathsf{\}} \pplus w' \\ eos = \{(h(p)\; e)\} \pplus eos', \and eos'' = \{(h(p)\; (e - \mathsf{\#}(b_{1}) + \mathsf{\#}(b_{2}))) \} \pplus eos'}{\langle i, k, w, o; eos \rangle \xrightarrow{{\#}(b_{1}) + \mathsf{\#}(b_{2})} \langle i, k, w, \mathsf{\{} b_{1}\mathsf{\&} b_{2} \mathsf{\}} \pplus o; eos'' \rangle} \\
\end{mathpar}
\begin{mathpar}
  \inferrule* [lab=NumAdd1]{((e + e') - \mathsf{\#}(b_{1}) + \mathsf{\#}(b_{2})) > 0 \\ eos = \{(h(p)\; e')\} \pplus eos', \and eos'' = \{(h(p)\; ((e + e') - \mathsf{\#}(n_{1}) + \mathsf{\#}(n_{2}))) \} \pplus eos'}{\langle \mathsf{\{} \mathsf{(} \mathsf{+}\; n_{1} \; n_{2} \mathsf{)}_{\chi(p,e)}\mathsf{\}}  \pplus i, k, w, o; eos \rangle \xrightarrow{{\#}(n_{1}) + \mathsf{\#}(n_{2})} \langle i, k, w, \mathsf{\{} n_{1}\mathsf{+} n_{2} \mathsf{\}} \pplus o; eos'' \rangle} \\
  \and
  \inferrule* [lab=NumAdd2]{w = \mathsf{\{} \mathsf{(} \mathsf{+}\; n_{1} \; n_{2} \mathsf{)}_{\chi(p,e)}\mathsf{\}} \pplus w' \\ eos = \{(h(p)\; e)\} \pplus eos', \and eos'' = \{(h(p)\; (e - \mathsf{\#}(n_{1}) + \mathsf{\#}(n_{2}))) \} \pplus eos'}{\langle i, k, w, o; eos \rangle \xrightarrow{{\#}(n_{1}) + \mathsf{\#}(n_{2})} \langle i, k, w', \mathsf{\{} n_{1}\mathsf{+} n_{2} \mathsf{\}} \pplus o; eos'' \rangle} \\
  \inferrule* [lab=NumMult1]{((e + e') - \mathsf{\#}(n_{1}) + \mathsf{\#}(n_{2})) > 0 \\ eos = \{(h(p)\; e')\} \pplus eos', \and eos'' = \{(h(p)\; ((e + e') - \mathsf{\#}(n_{1}) + \mathsf{\#}(n_{2}))) \} \pplus eos'}{\langle \mathsf{\{} \mathsf{(} \mathsf{*}\; n_{1} \; n_{2} \mathsf{)}\mathsf{\}}  \pplus i, k, w, o; eos \rangle \xrightarrow{{\#}(n_{1}) + \mathsf{\#}(n_{2})} \langle i, k, w, \mathsf{\{} n_{1}\mathsf{*} n_{2} \mathsf{\}} \pplus o; eos'' \rangle} \\
  \and
  \inferrule* [lab=NumMult2]{w = \mathsf{\{} \mathsf{(} \mathsf{*}\; n_{1} \; n_{2} \mathsf{)}_{\chi(p,e)}\mathsf{\}} \pplus w' \\ eos = \{(h(p)\; e)\} \pplus eos', \and eos'' = \{(h(p)\; (e - \mathsf{\#}(n_{1}) + \mathsf{\#}(n_{2}))) \} \pplus eos'}{\langle i, k, w, o; eos \rangle \xrightarrow{{\#}(n_{1}) + \mathsf{\#}(n_{2})} \langle i, k, w, \mathsf{\{} n_{1}\mathsf{*} n_{2} \mathsf{\}} \pplus o; eos'' \rangle} \\
\end{mathpar}
\begin{mathpar}
  \inferrule* [lab=StrAdd1]{((e + e') - \mathsf{\#}(s_{1}) + \mathsf{\#}(s_{2})) > 0 \\ eos = \{(h(p)\; e')\} \pplus eos', \and eos'' = \{(h(p)\; ((e + e') - \mathsf{\#}(s_{1}) + \mathsf{\#}(s_{2}))) \} \pplus eos'}{\langle \mathsf{\{} \mathsf{(} \mathsf{+}\; s_{1} \; s_{2} \mathsf{)}\mathsf{\}}  \pplus i, k, w, o; eos \rangle \xrightarrow{{\#}(s_{1}) + \mathsf{\#}(s_{2})} \langle i, k, w, \mathsf{\{} s_{1}\mathsf{+} s_{2} \mathsf{\}} \pplus o; eos'' \rangle} \\
  \and
  \inferrule* [lab=StrAdd2]{w = \mathsf{\{} \mathsf{(} \mathsf{+}\; s_{1} \; s_{2} \mathsf{)}_{\chi(p,e)}\mathsf{\}} \pplus w' \\ eos = \{(h(p)\; e)\} \pplus eos', \and eos'' = \{(h(p)\; (e - \mathsf{\#}(s_{1}) + \mathsf{\#}(s_{2}))) \} \pplus eos'}{\langle i, k, w, o; eos \rangle \xrightarrow{{\#}(s_{1}) + \mathsf{\#}(s_{2})} \langle i, k, w, \mathsf{\{} s_{1}\mathsf{+} s_{2} \mathsf{\}} \pplus o; eos'' \rangle}
\end{mathpar}

\section{Compiling MeTTa to rho}

In this section we illustrate the value of having an operational semantics by developing a compiler from MeTTa to the rho-calculus, and resource-bounded MeTTa to rholang. The essence of the translation is to use a channels for each of the registers. 

\subsection{MeTTa to the rho-calculus}
In the translation given below we employ two semantic functions. One, written $\meaningof{-}_{C}$, turns a state into a collection of data deployed at four channels, $i,k,w,o$, and another, written $\meaningof{-}_{E}$, that processes that data. We also employ a semantic function, written $\meaningof{-}$,that transliterates MeTTa term syntax into rho term syntax for use in pattern matching, etc.

The astute reader will notice that there is a question about the well-definedness of $\meaningof{-}_{E}$. It is possible that there are multiple transitions out of a single state. The actual function takes a sum of all possible transitions:

\begin{mathpar}
  \meaningof{S}_{E} := \Sigma_{r \in \{ r | S \xrightarrow{r} S' \}}\meaningof{S}_{r}
\end{mathpar}

where $\meaningof{-}_{r}$ are defined below.

The meaning of a MeTTa computation is given as the composition of the configuration and evaluation functions. That is,

\begin{mathpar}
  \meaningof{\langle i, k, w, o \rangle}_{M} = \meaningof{\langle i, k, w, o \rangle}_{C} \; \mathsf{|} \; \meaningof{\langle i, k, w, o \rangle}_{E}
\end{mathpar}

The reason for this factorization of the semantics is to facilitate the proof of correctness, as we will see in the proof. %

\subsubsection{Space configuration}

\begin{mathpar}
  \begin{array}{lll}
    \meaningof{\langle \mathsf{\{} t \mathsf{\}} \pplus i, k, w, o \rangle}_{C}(i,k,w,o) & = & i\mathsf{!}\mathsf{(}\meaningof{t}\mathsf{)}\; \mathsf{|}\; \meaningof{\langle i, k, w, o \rangle}_{C}(i,k,w,o) \\
    \meaningof{\langle i, \mathsf{\{} t \mathsf{\}} \pplus k, w, o \rangle}_{C}(i,k,w,o) & = & k\mathsf{!}\mathsf{(}\meaningof{t}\mathsf{)}\; \mathsf{|}\; \meaningof{\langle i, k, w, o \rangle}_{C}(i,k,w,o) \\
    \meaningof{\langle i, k, \mathsf{\{} t \mathsf{\}} \pplus w, o \rangle}_{C}(i,k,w,o) & = & w\mathsf{!}\mathsf{(}\meaningof{t}\mathsf{)}\; \mathsf{|}\; \meaningof{\langle i, k, w, o \rangle}_{C}(i,k,w,o) \\
    \meaningof{\langle i, k, w, \mathsf{\{} t \mathsf{\}} \pplus o \rangle}_{C}(i,k,w,o) & = & o\mathsf{!}\mathsf{(}\meaningof{t}\mathsf{)}\; \mathsf{|}\; \meaningof{\langle i, k, w, o \rangle}_{C}(i,k,w,o) \\
  \end{array}
\end{mathpar}

\subsubsection{Space evaluation}
\begin{mathpar}
  \begin{array}{lll}
    \meaningof{\langle \mathsf{\{} t' \mathsf{\}} \pplus i, \mathsf{\{} (\mathsf{=}\; t_{1} \; u_{1}), \ldots, (\mathsf{=}\; t_{n} \; u_{n}) \mathsf{\}} \pplus k, w, o \rangle}_{Query}(i,k,w,o) & & \\
    = & & \\
    \mathsf{for}( \meaningof{t'} \leftarrow i )\{ \mathsf{(} \mathsf{new}\; s\mathsf{)} \{ \Pi_{j=1}^{n} \mathsf{for}\mathsf{(} \mathsf{(} \mathsf{=} \; \meaningof{t'} \; \mathsf{(} \meaningof{t_{j}}\; v \mathsf{)} \mathsf{)} \leftarrow k \mathsf{)}\{ w\mathsf{!}(\meaningof{u_{j}}) \mathsf{|} s\mathsf{!}\mathsf{(} \mathsf{(} \mathsf{=} \; \meaningof{t_{j}} \; \mathsf{(} \meaningof{t_{j}}\; v \mathsf{)} \mathsf{)} \mathsf{)} \} & & \\
    \quad \quad \quad \quad \quad \quad \mathsf{|}\; \mathsf{for}\mathsf{(} r_{1} \leftarrow s \; \mathsf{\&}\; \ldots \; \mathsf{\&}\; r_{n} \leftarrow s \mathsf{)}\{ & & \\
    \quad \quad \quad \quad \quad \quad \quad \quad \Pi_{j=1}^{n} k\mathsf{!}\mathsf{(} r_{j} \mathsf{)}\; \mathsf{|}\; \meaningof{\langle i, \mathsf{\{} (\mathsf{=}\; t_{1} \; u_{1}), \ldots, (\mathsf{=}\; t_{n} \; u_{n}) \mathsf{\}} \pplus k, w, o \rangle}_{E}(i,k,w,o) & & \\
    \quad \quad \quad \quad \quad \quad \quad \} \} \} & &\\
    \meaningof{\langle i, \mathsf{\{} (\mathsf{=}\; t_{1} \; u_{1}), \ldots, (\mathsf{=}\; t_{n} \; u_{n}) \mathsf{\}} \pplus k, \mathsf{\{} u \mathsf{\}} \pplus w, o \rangle}_{Chain}(i,k,w,o) & & \\
    = & & \\
    \mathsf{for}( \meaningof{u} \leftarrow w )\{ \mathsf{(} \mathsf{new}\; s\mathsf{)}\{ \Pi_{j=1}^{n} \mathsf{for}\mathsf{(} \mathsf{(} \mathsf{(} \mathsf{=} \; \meaningof{u}\; \mathsf{(} \meaningof{t_{j}}\; v \mathsf{)}\mathsf{)} \leftarrow k \mathsf{)}\{ w\mathsf{!}(\meaningof{u_{j}}) \mathsf{|} s\mathsf{!}\mathsf{(} \mathsf{(} \mathsf{=} \; \meaningof{u} \; \mathsf{(} \meaningof{t_{j}}\; v \mathsf{)} \mathsf{)} \mathsf{)} \} & & \\
    \quad \quad \quad \quad \quad \quad \mathsf{|}\; \mathsf{for}\mathsf{(} r_{1} \leftarrow s \; \mathsf{\&}\; \ldots \; \mathsf{\&}\; r_{n} \leftarrow s \mathsf{)}\{ & & \\
    \quad \quad \quad \quad \quad \quad \quad \quad \Pi_{j=1}^{n} k\mathsf{!}\mathsf{(} r_{j} \mathsf{)}\; \mathsf{|}\; \meaningof{\langle i, \mathsf{\{} (\mathsf{=}\; t_{1} \; u_{1}), \ldots, (\mathsf{=}\; t_{n} \; u_{n}) \mathsf{\}} \pplus k, w, o \rangle}_{E}(i,k,w,o) & & \\
    \quad \quad \quad \quad \quad \quad \quad \} \} \} & &\\
    \meaningof{\langle \mathsf{\{} \mathsf{(}\mathsf{transform}\; t \; u\mathsf{)} \mathsf{\}} \pplus i, \mathsf{\{} t_{1}, \ldots, t_{n} \mathsf{\}} \pplus k, w, o \rangle}_{Transform}(i,k,w,o) & & \\
    = & & \\
    \mathsf{for}( \mathsf{(}\mathsf{transform}\; \meaningof{t}\; \meaningof{u}\mathsf{)} \leftarrow i )\{ \mathsf{(} \mathsf{new}\; s\mathsf{)} \{ \Pi_{j=1}^{n} \mathsf{for}\mathsf{(} \meaningof{t} \leftarrow k \mathsf{)}\{ w\mathsf{!}\mathsf{(}\meaningof{u}\mathsf{)} \mathsf{|} s \mathsf{!}\mathsf{(} \meaningof{t} \mathsf{)} \} & & \\
    \quad \quad \quad \quad \quad \quad \quad \quad \quad \quad \quad \quad \mathsf{|}\; \mathsf{for}\mathsf{(} r_{1} \leftarrow s \; \mathsf{\&}\; \ldots \; \mathsf{\&}\; r_{n} \leftarrow s \mathsf{)}\{ & & \\
    \quad \quad \quad \quad \quad \quad \quad \quad \quad \quad \quad \quad \quad \Pi_{j=1}^{n} k\mathsf{!}\mathsf{(} r_{j} \mathsf{)}\; \mathsf{|}\; \meaningof{\langle i, \mathsf{\{} t_{1}, \ldots, t_{n} \mathsf{\}} \pplus k, w, o \rangle}_{E}(i,k,w,o) & & \\
    \quad \quad \quad \quad \quad \quad \quad \quad \quad \quad \quad \quad \} \} \} & &\\
    \meaningof{\langle \mathsf{\{} \mathsf{(} \mathsf{addAtom}\; t\mathsf{)}\mathsf{\}}  \pplus i, k, w, o \rangle}_{AddAtom1}(i,k,w,o) & & \\
    = & & \\
    \mathsf{for}( \mathsf{(}\mathsf{addAtom}\; \meaningof{t} \mathsf{)} \leftarrow i )\{ k\mathsf{!}(\meaningof{t}) \;
    \mathsf{|}\; \meaningof{\langle i, k, w, o \rangle}_{E}(i,k,w,o) \} & &\\
    \meaningof{\langle i, \mathsf{\{} \mathsf{(} \mathsf{addAtom}\; t\mathsf{)}\mathsf{\}} \pplus k, w, o\rangle}_{AddAtom2}(i,k,w,o) & & \\
    = & & \\
    \mathsf{for}( \mathsf{(}\mathsf{addAtom}\; \meaningof{t} \mathsf{)} \leftarrow k )\{ k\mathsf{!}(\meaningof{t}) \;
    \mathsf{|}\; \meaningof{\langle i, \mathsf{\{} \mathsf{(} \mathsf{addAtom}\; t\mathsf{)}\mathsf{\}} \pplus k, w, o \rangle}_{E}(i,k,w,o) \} & &\\
    \meaningof{\langle \mathsf{\{} \mathsf{(} \mathsf{remAtom}\; t\mathsf{)}\mathsf{\}}  \pplus i, \mathsf{\{} t \mathsf{\}} \pplus k, w, o \rangle}_{RemAtom1}(i,k,w,o) & & \\
    = & & \\
    \mathsf{for}( \mathsf{(}\mathsf{remAtom}\; \meaningof{t} \mathsf{)} \leftarrow i )\{ \mathsf{for}\mathsf{(} \meaningof{t} \leftarrow k \mathsf{)}\{ o\mathsf{!}(\meaningof{()}) \} \;
    \mathsf{|}\; \meaningof{\langle i, k, w, o \rangle}_{E}(i,k,w,o)  \} & &\\
    \meaningof{\langle i, \mathsf{\{} \mathsf{(} \mathsf{remAtom}\; t\mathsf{)} \mathsf{\}} \pplus \mathsf{\{} t \mathsf{\}} \pplus k, w, o\rangle}_{RemAtom2}(i,k,w,o) & & \\
    = & & \\
    \mathsf{for}( \mathsf{(}\mathsf{remAtom}\; \meaningof{t} \mathsf{)} \leftarrow k )\{ \mathsf{for}\mathsf{(} \meaningof{t} \leftarrow k \mathsf{)}\{ o\mathsf{!}(\meaningof{()}) \} \;
    \mathsf{|}\; \meaningof{\langle i, \mathsf{\{} \mathsf{(} \mathsf{remAtom}\; t\mathsf{)} \mathsf{\}} \pplus k, w, o \rangle}_{E}(i,k,w,o) \} & &\\
  \end{array}
\end{mathpar}

\subsection{Correctness of the translation}
\begin{theorem}[MeTTa2rho correctness]
  \begin{mathpar}
    S_{1} \wbbisim S_{2} \iff \meaningof{S_{1}}_{M} \wbbisim \meaningof{S_{2}}_{M}
  \end{mathpar}
\end{theorem}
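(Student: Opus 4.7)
The plan is to establish the theorem via the standard operational-correspondence technique for showing a translation preserves weak barbed bisimulation. The forward direction ($\Rightarrow$) would proceed by proving that the relation $\mathcal{R} = \{(\meaningof{S_{1}}_{M}, \meaningof{S_{2}}_{M}) : S_{1} \wbbisim S_{2}\}$ is a weak barbed bisimulation in the rho-calculus; the reverse direction ($\Leftarrow$) mirrors this in MeTTa using the converse relation. The factorization $\meaningof{S}_{M} = \meaningof{S}_{C} \parop \meaningof{S}_{E}$ is the key enabler: the configuration term $\meaningof{S}_{C}$ deploys each register's contents as messages on the channels $i,k,w,o$, while $\meaningof{S}_{E}$ is a parallel composition of guarded processes, one per applicable transition clause, that consume and re-deposit those messages so as to mirror exactly one MeTTa rewrite per synchronisation cascade.

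The main technical content sits in a pair of operational-correspondence lemmas. For \emph{soundness}, I would show that whenever $S \to S'$ via some rewrite rule, $\meaningof{S}_{M} \wred Q$ with $Q \wbbisim \meaningof{S'}_{M}$; this is a case analysis on the rule fired, unrolling the corresponding clause of $\meaningof{-}_{E}$ and checking that the $\mathsf{new}\; s$-bounded synchronisations fire in exactly the pattern prescribed by the premises (the substitutions $\sigma_{i}$, the insensitivity side-condition, and the multiset of $n$ resulting workspace entries). For \emph{completeness}, I would show conversely that every $\meaningof{S}_{M} \wred Q$ either remains within a bisimulation class of $\meaningof{S}_{M}$ (i.e. is an administrative reduction internal to one encoding step) or ultimately produces some $Q \wbbisim \meaningof{S'}_{M}$ with $S \to S'$ a genuine MeTTa transition. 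Barb correspondence is then immediate: a MeTTa state $S$ exhibits its canonical barb (a term $t$ in the input register) iff $\meaningof{S}_{M}$ exhibits a matching barb on channel $i$ carrying $\meaningof{t}$, and likewise for $k,w,o$; thus the two bisimulations agree on observations.

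The hard part will be completeness. Reductions in the rho-calculus image are much more fine-grained than MeTTa rewrites: the clauses of $\meaningof{-}_{E}$ contain nested $\mathsf{for}$-comprehensions over a local channel $s$ that perform several synchronisations to encode the multiset of $n$ simultaneous unifications required by e.g.\ the \textbf{Query}, \textbf{Chain}, and \textbf{Transform} rules. I would isolate these as \emph{administrative} reductions and prove a diamond-style commutation lemma showing that any partial administrative prefix can be extended to a complete encoding step without altering the class of barbs subsequently reachable, so that every reachable intermediate is weakly bisimilar either to $\meaningof{S}_{M}$ or to some $\meaningof{S'}_{M}$ for a genuine successor $S'$. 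A secondary obstacle is the non-determinism in $\meaningof{S}_{E} := \Sigma_{r}\meaningof{S}_{r}$: I would interpret this sum as parallel composition in the rho-calculus so that every applicable rule is offered as a competing guard on the input, knowledge, or workspace channels, and then exhibit a bijection between MeTTa rules firing at $S$ and minimal successful synchronisation prefixes at $\meaningof{S}_{M}$, ensuring that the correspondence on non-deterministic branching is exact rather than merely up to inclusion.
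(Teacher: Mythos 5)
Your proposal follows essentially the same route as the paper: the paper's proof is only a sketch asserting that the translation is correct-by-construction, with the left- and right-hand sides of each clause of the evaluation function $\meaningof{-}_{E}$ serving as the cross-domain bisimulation pairs and with register contents (respectively their images on the channels $i,k,w,o$ under $\meaningof{-}_{C}$) serving as barbs --- which is precisely the relation-composition and operational-correspondence argument you spell out. The only substantive remark is that the completeness half you rightly flag as the hard part --- taming the administrative reductions through the local channel $s$ and matching the non-deterministic sum $\Sigma_{r}\meaningof{S}_{r}$ against the rules applicable at $S$ --- is entirely elided in the paper's sketch, so your proposal is, if anything, more detailed than what the paper itself provides.
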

\begin{proof}
  Proof sketch: Essentially, the translation is correct-by-construction. Intuitively, we see that there is a bisimulation relation between MeTTa computations and their translations in the rho-calculus. This bisimulation may be composed with a bisimulation between MeTTa calculations to yield a bisimulation in the rho-translation, and vice versa.
  The bisimulation bridging between the two domains is effectively represented in the translation. For each different kind of state, the left hand side of each bisimulation pair is the left hand side of the \emph{definition} evaluation semantic function and the right hand side of the bisimulation pair is the right hand side of the definition. Hence, correct-by-construction.
  The formal proof uses terms in the input, working, and output registers as barbs for the notion of bisimulation on MeTTa computations, while their translations via the configuration function are barbs in the bisimulation in the rho-calculus.
\end{proof}

\subsection{Resource-bounded MeTTa to rholang}
\subsubsection{Space configuration}
\begin{mathpar}
  \begin{array}{lll}
    \meaningof{\langle \mathsf{\{} t \mathsf{\}} \pplus i, k, w, o; eos \rangle}(i,k,w,o,c) & = & i\mathsf{!}\mathsf{(}\meaningof{t}\mathsf{)}\; \mathsf{|}\; \meaningof{\langle i, k, w, o; eos \rangle}(i,k,w,o,c) \\
    \meaningof{\langle i, \mathsf{\{} t \mathsf{\}} \pplus k, w, o; eos \rangle}(i,k,w,o,c) & = & k\mathsf{!}\mathsf{(}\meaningof{t}\mathsf{)}\; \mathsf{|}\; \meaningof{\langle i, k, w, o; eos \rangle}(i,k,w,o,c) \\
    \meaningof{\langle i, k, \mathsf{\{} t \mathsf{\}} \pplus w, o; eos \rangle}(i,k,w,o,c) & = & w\mathsf{!}\mathsf{(}\meaningof{t}\mathsf{)}\; \mathsf{|}\; \meaningof{\langle i, k, w, o; eos \rangle}(i,k,w,o,c) \\
    \meaningof{\langle i, k, w, \mathsf{\{} t \mathsf{\}} \pplus o; eos \rangle}(i,k,w,o,c) & = & o\mathsf{!}\mathsf{(}\meaningof{t}\mathsf{)}\; \mathsf{|}\; \meaningof{\langle i, k, w, o; eos \rangle}(i,k,w,o,c) \\
    \meaningof{\langle i, k, w, o; \mathsf{\{} \mathsf{(}h(p) e \mathsf{)} \mathsf{\}} \pplus eos \rangle}(i,k,w,o,c) & = & c\mathsf{!}\mathsf{(}\meaningof{h(p)} \meaningof{e}\mathsf{)}\; \mathsf{|}\; \meaningof{\langle i, k, w, o; eos \rangle}(i,k,w,o,c) \\
  \end{array}
\end{mathpar}

\subsubsection{Space evaluation}
\begin{mathpar}
  \begin{array}{lll}
    \meaningof{\langle \mathsf{\{} t'_{\chi(p,e)} \mathsf{\}} \pplus i, \mathsf{\{} (\mathsf{=}\; t_{1} \; u_{1}), \ldots, (\mathsf{=}\; t_{n} \; u_{n}) \mathsf{\}} \pplus k, w, o; eos \rangle}_{Query}(i,k,w,o,c) & & \\
    = & & \\
    \mathsf{for}( \meaningof{t'}_{\chi(p,e)} \leftarrow i \;\mathsf{\&}\; \mathsf{(} \meaningof{h(p)}\;e'\mathsf{)} \leftarrow c )\{ & & \\
    \quad \mathsf{if} \mathsf{(} \mathsf{(} \mathsf{(} \meaningof{e} \mathsf{+} \meaningof{e'}\mathsf{)} \mathsf{-} \mathsf{(} \Sigma_{i} \meaningof{\mathsf{\#}(\sigma_{i})} + \Sigma_{i} \meaningof{\mathsf{\#}(u_{i}\sigma_{i})} \mathsf{)} \mathsf{)} \mathsf{>} \mathsf{0}\mathsf{)} \{ & & \\
    \quad \quad \quad \mathsf{(} \mathsf{new}\; s\mathsf{)} \{ \Pi_{j=1}^{n} \mathsf{for}\mathsf{(} \mathsf{(} \mathsf{=} \; \meaningof{t'} \; v \mathsf{)} \leftarrow k \mathsf{)}\{ w\mathsf{!}(\meaningof{u_{j}}) \mathsf{|} s\mathsf{!}\mathsf{(} \mathsf{(} \mathsf{=} \; \meaningof{t'} \; v \mathsf{)} \mathsf{)} \} & & \\
    \quad \quad \quad \quad \quad \quad \quad \mathsf{|}\; \mathsf{for}\mathsf{(} r_{1} \leftarrow s \; \mathsf{\&}\; \ldots \; \mathsf{\&}\; r_{n} \leftarrow s \mathsf{)}\{ & & \\
    \quad \quad \quad \quad \quad \quad \quad \quad \quad \Pi_{j=1}^{n} k\mathsf{!}\mathsf{(} r_{j} \mathsf{)}\; \mathsf{|}\; \meaningof{\langle i, \mathsf{\{} (\mathsf{=}\; t_{1} \; u_{1}), \ldots, (\mathsf{=}\; t_{n} \; u_{n}) \mathsf{\}} \pplus k, w, o; eos \rangle}_{E}(i,k,w,o,c) & & \\
    \quad \quad \quad \quad \quad \quad \quad \quad \} \} \} & & \\
    \quad \mathsf{else} \{ i!\mathsf{(} \meaningof{t'}_{\chi(p,e)} \mathsf{)}\; \mathsf{|}\; c\mathsf{!}\mathsf{(} \mathsf{(} \meaningof{h(p)}\;e'\mathsf{)} \mathsf{)} \}
    \} & &\\
    \meaningof{\langle i, \mathsf{\{} (\mathsf{=}\; t_{1} \; u_{1}), \ldots, (\mathsf{=}\; t_{n} \; u_{n}) \mathsf{\}} \pplus k, \mathsf{\{} u_{\chi(p,\bot)} \mathsf{\}} \pplus w, o; eos \rangle}_{Chain}(i,k,w,o,c) & & \\
    = & & \\
    \mathsf{for}( \meaningof{u}_{\chi(p,\bot)} \leftarrow o \;\mathsf{\&}\; \mathsf{(} \meaningof{h(p)}\;e\mathsf{)} \leftarrow c)\{ & & \\
    \quad \mathsf{if} \mathsf{(} \mathsf{(} \meaningof{e} \mathsf{-} \mathsf{(} \Sigma_{i} \meaningof{\mathsf{\#}(\sigma_{i})} + \Sigma_{i} \meaningof{\mathsf{\#}(u_{i}\sigma_{i})} \mathsf{)} \mathsf{)} \mathsf{>} \mathsf{0}\mathsf{)} \{ & & \\
    \quad \quad \quad \mathsf{(} \mathsf{new}\; s\mathsf{)}\{ \Pi_{j=1}^{n} \mathsf{for}\mathsf{(} \mathsf{(} \mathsf{(} \mathsf{=} \; \meaningof{u}\; v \mathsf{)} \leftarrow k \mathsf{)}\{ w\mathsf{!}(\meaningof{u_{j}}) \mathsf{|} s\mathsf{!}\mathsf{(} \mathsf{(} \mathsf{=} \; \meaningof{u} \; v \mathsf{)} \mathsf{)} \} & & \\
    \quad \quad \quad \quad \quad \quad \quad \mathsf{|}\; \mathsf{for}\mathsf{(} r_{1} \leftarrow s \; \mathsf{\&}\; \ldots \; \mathsf{\&}\; r_{n} \leftarrow s \mathsf{)}\{ & & \\
    \quad \quad \quad \quad \quad \quad \quad \quad \quad \Pi_{j=1}^{n} k\mathsf{!}\mathsf{(} r_{j} \mathsf{)}\; \mathsf{|}\; \meaningof{\langle i, \mathsf{\{} (\mathsf{=}\; t_{1} \; u_{1}), \ldots, (\mathsf{=}\; t_{n} \; u_{n}) \mathsf{\}} \pplus k, w, o; eos \rangle}_{E}(i,k,w,o,c) & & \\
    \quad \quad \quad \quad \quad \quad \quad \quad \} \} \} & & \\
    \quad \mathsf{else} \{ i!\mathsf{(} \meaningof{u}_{\chi(p,\bot)} \mathsf{)}\; \mathsf{|}\; c\mathsf{!}\mathsf{(} \mathsf{(} \meaningof{h(p)}\;e\mathsf{)} \mathsf{)} \}
    \} & &\\
    \meaningof{\langle \mathsf{\{} \mathsf{(}\mathsf{transform}\; t \; u\mathsf{)}_{\chi(p,e)} \mathsf{\}} \pplus i, \mathsf{\{} t_{1}, \ldots, t_{n} \mathsf{\}} \pplus k, w, o; eos \rangle}_{transform}(i,k,w,o,c) & & \\
    = & & \\
    \mathsf{for}( \mathsf{(}\mathsf{transform}\; \meaningof{t}\; \meaningof{u}\mathsf{)}_{\chi(p,e)} \leftarrow i \;\mathsf{\&}\; \mathsf{(} \meaningof{h(p)}\;e'\mathsf{)} \leftarrow c)\{ & & \\
    \quad \mathsf{if} \mathsf{(} \mathsf{(} \mathsf{(} \meaningof{e} \mathsf{+} \meaningof{e'}\mathsf{)} \mathsf{-} \mathsf{(} \Sigma_{i} \meaningof{\mathsf{\#}(\sigma_{i})} + \Sigma_{i} \meaningof{\mathsf{\#}(u\sigma_{i})} \mathsf{)} \mathsf{)} \mathsf{>} \mathsf{0}\mathsf{)} \{ & & \\
    \quad \quad \quad \mathsf{(} \mathsf{new}\; s\mathsf{)} \{ \Pi_{j=1}^{n} \mathsf{for}\mathsf{(} \meaningof{t} \leftarrow k \mathsf{)}\{ w\mathsf{!}\mathsf{(}\meaningof{u}\mathsf{)} \mathsf{|} s \mathsf{!}\mathsf{(} \meaningof{t} \mathsf{)} \} & & \\
    \quad \quad \quad \quad \quad \quad \quad \mathsf{|}\; \mathsf{for}\mathsf{(} r_{1} \leftarrow s \; \mathsf{\&}\; \ldots \; \mathsf{\&}\; r_{n} \leftarrow s \mathsf{)}\{ & & \\
    \quad \quad \quad \quad \quad \quad \quad \quad \Pi_{j=1}^{n} k\mathsf{!}\mathsf{(} r_{j} \mathsf{)}\; \mathsf{|}\; \meaningof{\langle i, \mathsf{\{} t_{1}, \ldots, t_{n} \mathsf{\}} \pplus k, w, o; eos \rangle}_{E}(i,k,w,o,c) & & \\
    \quad \quad \quad \quad \quad \quad \quad \} \} \} & &\\
    \quad \mathsf{else} \{ i!\mathsf{(}\mathsf{(}\mathsf{transform}\; \meaningof{t}\; \meaningof{u}\mathsf{)}_{\chi(p,e)}\mathsf{)}\; \mathsf{|}\; c\mathsf{!}\mathsf{(} \mathsf{(} \meaningof{h(p)}\;e'\mathsf{)} \mathsf{)} \}
    \} & &
  \end{array}
\end{mathpar}
\begin{mathpar}
  \begin{array}{lll}
    \meaningof{\langle \mathsf{\{} \mathsf{(} \mathsf{addAtom}\; t\mathsf{)}_{\chi(p,e)}\mathsf{\}}  \pplus i, k, w, o; eos \rangle}_{E}(i,k,w,o,c) & & \\
    = & & \\
    \mathsf{for}( \mathsf{(}\mathsf{addAtom}\; \meaningof{t} \mathsf{)}_{\chi(p,e)} \leftarrow i \;\mathsf{\&}\; \mathsf{(} \meaningof{h(p)}\;e'\mathsf{)} \leftarrow c)\{ & & \\
    \quad \mathsf{if} \mathsf{(} \mathsf{(} \mathsf{(} \meaningof{e} \mathsf{+} \meaningof{e'}\mathsf{)} \mathsf{-} \meaningof{\mathsf{\#}(t)} \mathsf{)} \mathsf{>} \mathsf{0}\mathsf{)} \{ & & \\
    \quad\quad \quad k\mathsf{!}(\meaningof{t}) \;
    \mathsf{|}\; \meaningof{\langle i, k, w, o; eos \rangle}_{E}(i,k,w,o,c) \} & &\\
    \quad \mathsf{else} \{ i!\mathsf{(}\mathsf{(}\mathsf{addAtom}\; \meaningof{t}\mathsf{)}_{\chi(p,e)}\mathsf{)}\; \mathsf{|}\; c\mathsf{!}\mathsf{(} \mathsf{(} \meaningof{h(p)}\;e'\mathsf{)} \mathsf{)} \}
    \} & & \\
    \meaningof{\langle i, \mathsf{\{} \mathsf{(} \mathsf{addAtom}\; t\mathsf{)}\mathsf{\}} \pplus k, w, o\rangle}_{AddAtom2}(i,k,w,o,c) & & \\
    = & & \\
    \mathsf{for}( \mathsf{(}\mathsf{addAtom}\; \meaningof{t} \mathsf{)} \leftarrow k )\{ k\mathsf{!}(\meaningof{t}) \;
    \mathsf{|}\; \meaningof{\langle i, \mathsf{\{} \mathsf{(} \mathsf{addAtom}\; t\mathsf{)}\mathsf{\}} \pplus k, w, o; eos \rangle}_{E}(i,k,w,o,c) \} & &\\
    \meaningof{\langle \mathsf{\{} \mathsf{(} \mathsf{remAtom}\; t\mathsf{)}_{\chi(p,e)}\mathsf{\}}  \pplus i, \mathsf{\{} t \mathsf{\}} \pplus k, w, o; eos \rangle}_{RemAtom1}(i,k,w,o,c) & & \\
    = & & \\
    \mathsf{for}( \mathsf{(}\mathsf{remAtom}\; \meaningof{t} \mathsf{)}_{\chi(p,e)} \leftarrow i \;\mathsf{\&}\; \mathsf{(} \meaningof{h(p)}\;e'\mathsf{)} \leftarrow c)\{ & & \\
    \quad \mathsf{if} \mathsf{(} \mathsf{(} \mathsf{(} \meaningof{e} \mathsf{+} \meaningof{e'}\mathsf{)} \mathsf{-} \meaningof{\mathsf{\#}(t)} \mathsf{)} \mathsf{>} \mathsf{0}\mathsf{)} \{ & & \\
    \quad \quad \mathsf{for}\mathsf{(} \meaningof{t} \leftarrow k \mathsf{)}\{ o\mathsf{!}(\meaningof{()}) \} \;
    \mathsf{|}\; \meaningof{\langle i, k, w, o; eos \rangle}_{E}(i,k,w,o,c)  \} & &\\
    \quad \mathsf{else} \{ i!\mathsf{(}\mathsf{(}\mathsf{remAtom}\; \meaningof{t}\mathsf{)}_{\chi(p,e)}\mathsf{)}\; \mathsf{|}\; c\mathsf{!}\mathsf{(} \mathsf{(} \meaningof{h(p)}\;e'\mathsf{)} \mathsf{)} \}
    \} & & \\
    \meaningof{\langle i, \mathsf{\{} \mathsf{(} \mathsf{remAtom}\; t\mathsf{)} \mathsf{\}} \pplus \mathsf{\{} t \mathsf{\}} \pplus k, w, o\rangle}_{RemAtom2}(i,k,w,o,c) & & \\
    = & & \\
    \mathsf{for}( \mathsf{(}\mathsf{remAtom}\; \meaningof{t} \mathsf{)} \leftarrow k )\{ \mathsf{for}\mathsf{(} \meaningof{t} \leftarrow k \mathsf{)}\{ o\mathsf{!}(\meaningof{()}) \} \;
    \mathsf{|}\; \meaningof{\langle i, \mathsf{\{} \mathsf{(} \mathsf{remAtom}\; t\mathsf{)} \mathsf{\}} \pplus k, w, o; eos \rangle}_{E}(i,k,w,o,c) \} & &\\
  \end{array}
\end{mathpar}

\subsection{Correctness of the translation}
\begin{theorem}[Resource-bounded MeTTa2rho correctness]
  \begin{mathpar}
    S_{1} \wbbisim S_{2} \iff \meaningof{S_{1}}_{M} \wbbisim \meaningof{S_{2}}_{M}
  \end{mathpar}
\end{theorem}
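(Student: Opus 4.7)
The plan is to follow the same correct-by-construction template used for the unresourced theorem and lift it to the resourced setting, treating the new \emph{eos} register and the $\chi(p,e)$ signatures as extra bookkeeping that the translation threads through an additional channel $c$. I would first extend the barbed equivalence $\wbbisim$ on resourced MeTTa states by declaring, as before, that barbs are inhabitants of the $\mathsf{i}$, $\mathsf{w}$, and $\mathsf{o}$ registers, but now with barb names parameterised by the signature $\chi(p,e)$ so that any observer distinguishes terms with different residual budgets. On the rho-calculus side, the configuration function $\meaningof{-}_{C}$ already emits a message on $c$ for every $(h(p)\;e)$ pair, so we take barbs on $i$, $w$, $o$ and $c$ as the matching observables; this is the resourced analogue of the claim in the previous proof that the configuration function \emph{is} the bisimulation bridge.

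Next I would define the candidate relation $\mathcal{R}$ between resourced MeTTa states and rho processes by
\begin{mathpar}
  \mathcal{R} \;=\; \{\,(S,\,P)\;\mid\;P \wbbisim \meaningof{S}_{C}(i,k,w,o,c)\,\mathsf{|}\,\meaningof{S}_{E}(i,k,w,o,c)\,\}
\end{mathpar}
and verify the standard transfer conditions. The key observation is that each evaluation clause in the definition of $\meaningof{-}_{E}$ has the shape ``guard a receive on the relevant register(s) and on $c$, then branch on an $\mathsf{if}$ testing exactly the same inequality $(e + e') - c > 0$ that appears as a premise of the corresponding MeTTa rule, and in the $\mathsf{else}$ branch re-emit the consumed messages so the configuration is unchanged.'' Therefore every successful MeTTa step $S \xrightarrow{c} S'$ is matched by a sequence of rho-reductions that fires the same guard, produces the updated $eos$ value, and yields a process structurally congruent to $\meaningof{S'}_{C}\,\mathsf{|}\,\meaningof{S'}_{E}$; conversely the $\mathsf{else}$ branches correspond to stuck MeTTa states (insufficient budget) and so preserve the bisimulation by doing nothing observable.

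The cases \textsc{Query}, \textsc{Chain}, \textsc{Transform} also require the auxiliary bookkeeping with the private channel $s$ that collects the matched definitions and restores them to $k$, exactly as in the unresourced proof; no new argument is needed here beyond checking that the resource check happens \emph{once} per firing of the rule, which the translation arranges by placing the $\mathsf{if}$ outside the product $\Pi_{j=1}^{n}$. Finally, the $\iff$ follows by the same composition argument sketched for the first correctness theorem: MeTTa-side bisimilarities compose with $\mathcal{R}$ to give rho-side bisimilarities and vice versa.

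I expect the main obstacle to be the $\mathsf{else}$ branches. In MeTTa the lack of budget leaves the state fixed, whereas the rho translation performs a silent receive-then-emit cycle whose side effects on $c$ and $i$ are only structurally congruent, not syntactically equal, to the starting configuration. Discharging this will require showing that the compound reduction $\mathsf{for}(\ldots)\{ i!(\ldots)\,\mathsf{|}\,c!(\ldots)\}$ is weakly bisimilar to the original parallel composition up to $\scong$, i.e.\ arguing that the spurious $\tau$-step in the $\mathsf{else}$ branch is absorbed by $\wbbisim$. The remaining work, checking each of the seven core rules and the builtin rules, is routine once the $\mathsf{else}$-branch lemma and the resource-annotated barb definition are in place.
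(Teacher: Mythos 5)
Your proposal takes essentially the same route as the paper: the paper's own proof is a one-line sketch stating that the resource-bounded case ``essentially reprises'' the correct-by-construction barbed-bisimulation argument of the unresourced theorem, with barbs drawn from the $\mathsf{i}$, $\mathsf{w}$, $\mathsf{o}$ registers and their images under the configuration function. Your elaboration --- the explicit candidate relation $\mathcal{R}$, the observation that each $\mathsf{if}$ guard mirrors the corresponding rule's budget premise, and the flagged concern about the $\mathsf{else}$ branches introducing a $\tau$-step absorbed by weak bisimulation --- is consistent with and considerably more detailed than what the paper records.
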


\begin{proof}
  Proof sketch: while all the resource accounting adds to the complexity of the translation, the proof essentially reprises the previous one. 
\end{proof}

\section{Conclusion and future work}

We have presented two versions of the operational semantics for MeTTa, one that is fit for private implementations that have some external security model, and one that is fit for running in a decentralized setting.

This semantics does not address typed versions of MeTTa. An interesting avenue of approach is to apply Meredith and Stay's OSLF to this semantics to derive a type system for MeTTa that includes spatial and behavioral types \cite{DBLP:journals/corr/abs-2102-04672}.

\bibliographystyle{plain}   
\bibliography{mops.bib}

\end{document}